\documentclass[mathematics,article,accept,pdftex,oneauthor]{mdpi_template/Definitions/mdpi}

\firstpage{1}
\pubvolume{1}
\issuenum{1}
\articlenumber{0}
\pubyear{2026}
\copyrightyear{2026}
\datereceived{ }
\daterevised{ }
\dateaccepted{ }
\datepublished{ }

\makeatletter
\let\@origmaketitle\@maketitle
\renewcommand{\@maketitle}{%
  \let\origincludegraphics\includegraphics
  \renewcommand{\includegraphics}[2][]{%
    \def\@tempa{##2}%
    \@expandtwoargs\in@{.eps}{\@tempa}%
    \ifin@ \else \origincludegraphics[##1]{##2}\fi
  }%
  \@origmaketitle
  \let\includegraphics\origincludegraphics
}
\fancypagestyle{plain}{%
  \fancyhf{}%
  \fancyfoot[C]{\footnotesize\thepage}%
}
\renewcommand{\cright}{}
\renewcommand{\leftcolDates}{}
\renewcommand{\leftcolumnUpdatelogo}{}
\renewcommand{\leftcolumnCorrstatement}{}
\renewcommand{\leftcolumnCright}{}
\makeatother

\newgeometry{left=2.5cm, right=2.5cm, top=1.27cm, bottom=1.08cm,
  includehead, includefoot, footskip=36pt, headsep=24pt}
\fancyheadoffset{0pt}
\fancyfootoffset{0pt}
\usepackage{algorithm}
\usepackage{algpseudocode}
\usepackage{siunitx}
\usepackage{placeins}
\DeclareMathOperator{\erfcx}{erfcx}
\DeclareMathOperator{\erfc}{erfc}

\Title{Implying Volatility: How Fast Can We Go?}

\Author{Fabien Le Floc'h\orcidA{} and Jherek Healy}

\AuthorNames{Fabien Le Floc'h and Jherek Healy}

\address{%
Independent researcher; fabien@2ipi.com}

\corres{Correspondence: fabien@2ipi.com}

\abstract{%
FlashIV is a low-latency Black--Scholes implied-volatility solver for production
use.  It normalises each input to an out-of-the-money price and solves a
tail-stable erfcx/log-price residual.  The hot path combines a cheap
Li/asymptotic seed with a fixed, branch-light Householder refinement and guarded
boundary handling.  Across regular and stressed benchmarks, FlashIV stays close
to multiprecision Black reference prices while reducing latency in the reported
benchmark.  Detailed comparisons include recent implied-volatility solvers and
J\"ackel's \emph{Let's Be Rational}.  An optional guarded Newton correction gives
tighter agreement with a J\"ackel-style reference price when applications need
reference-price alignment.}

\keyword{implied volatility; Black--Scholes; Householder iteration; Chebyshev method;
erfcx; root-finding; computational finance}

\begin{document}

%%%%%%%%%%%%%%%%%%%%%%%%%%%%%%%%%%%%%%%%%%
\section{Introduction}
\label{sec:intro}

Implied volatility inversion, recovering the Black--Scholes volatility
$\sigma$ from an observed option price, is among the most frequently
executed numerical tasks in quantitative finance.  A single calibration
of a stochastic volatility surface may require $10^4$--$10^6$ such
inversions; real-time risk systems demand throughput of millions of
inversions per second.

The problem has received sustained attention.  J\"ackel~\cite{jaeckel2017}
introduced a solver of remarkable robustness, using region-dependent
asymptotic expansions, a complementary objective for the upper branch,
and carefully crafted log-space iterations.
Li~\cite{li2006} proposed a bivariate rational approximation as an
initial guess, later refined by Stefanica and
Radoi\v{c}i\'{c}~\cite{sr2017} (hereafter SR) who gave a closed-form
global approximation based on P\'{o}lya's bound for the cumulative
normal distribution, with
uniform relative-error guarantees for the implied-volatility guess.
Choi et~al.~\cite{choi2023} derived tighter implied volatility bounds
from option-delta inequalities and used those bounds to prove monotone
convergence of a Newton--Raphson root finder.
Healy~\cite{healy2024} combined the SR initial guess with
Householder iteration on the log price, providing a practical
implementation in his treatment of equity derivative pricing.

The solver we present, FlashIV (Fast Li/asymptotic seeded Householder implied
volatility), follows the standard two-stage structure, initial
guess followed by iterative refinement, but applies a sequence of
targeted optimisations to each stage.  Our techniques target
these bottlenecks:
\begin{itemize}
    \item A hybrid initial guess that replaces the SR formula with
  an approximately 4.1\,ns asymptotic approximation for out-of-the-money
      options, protected by near-ATM and high-price complementary guards
      (Section~\ref{sec:guess}).
  \item A two-tier erfcx strategy with a cheap pre-step and
        a full-precision polish (Section~\ref{sec:erfcx}).
  \item Elimination of the convergence loop in favour of a fixed
        iteration count with a conditional safety net
        (Section~\ref{sec:fixed}).
\end{itemize}

%%%%%%%%%%%%%%%%%%%%%%%%%%%%%%%%%%%%%%%%%%
\section{Problem Formulation}
\label{sec:problem}

\subsection{Normalization}
\label{sec:normalization}

All prices are first reduced to the same normalised OTM representation.
Let $C$ and $P$ denote undiscounted call and put prices with forward
$F$, strike $K$, and time to expiry~$T$.  We call an option at the money
(ATM) when $F=K$, and near ATM when $|\ln\left(F/K\right)|$ is small.  A call is
out of the money (OTM) when $F<K$, while a put is OTM when $F>K$; the
opposite cases are in the money (ITM).  Rather than carrying separate
call, put, ITM, and OTM formulae through the solver, we order the
forward--strike pair as
\[
  F_* = \min(F,K), \qquad K_* = \max(F,K), \qquad
  \frac{F_*}{K_*}=\min(F/K,K/F) \le 1.
\]
The corresponding OTM value is obtained by put--call parity, and by
exchanging the forward and strike for OTM puts:
\[
C_{\mathrm{OTM}}=
\begin{cases}
  C, & \text{call and } F\le K,\\
  C-(F-K), & \text{call and } F>K,\\
  P, & \text{put and } F>K,\\
  P-(K-F), & \text{put and } F<K.
\end{cases}
\]
Thus every admissible input is represented as an undiscounted OTM call
on $(F_*,K_*)$.  The variables passed to the scalar inversion are
\begin{linenomath}
\begin{equation}\label{eq:normalise}
  x = \ln\left(\frac{F_*}{K_*}\right) \le 0, \qquad
  e^x = \frac{F_*}{K_*}, \qquad
  c = \frac{C_{\mathrm{OTM}}}{F_*}, \qquad
  v = \sigma\sqrt{T}.
\end{equation}
\end{linenomath}
Here $v$ denotes total volatility.  When comparing with J\"ackel's
sqrt-forward normalisation, we also use
$\beta=c e^{x/2}=C_{\mathrm{OTM}}/\sqrt{F_*K_*}$.

This normalisation removes intrinsic value before inversion.  In
particular, originally ITM options are priced through their OTM parity
legs, avoiding the cancellation that would occur if one subtracted a
large intrinsic component from the original option price.

In these coordinates the Black formula for the normalised OTM call price is
\begin{linenomath}
\begin{equation}\label{eq:black}
  c(x, v) = \Phi\left(\frac{x}{v} + \frac{v}{2}\right)
    - e^{-x}\Phi\left(\frac{x}{v} - \frac{v}{2}\right),
\end{equation}
\end{linenomath}
where $\Phi$ denotes the standard normal cumulative
distribution function.

\subsection{Log-Price Objective and erfcx Decomposition}
\label{sec:log-price-objective}
Following J\"ackel~\cite{jaeckel2017} and Choi et~al.~\cite{choi2023}, the iteration is performed in log-price
space rather than on the raw price residual.  For a target normalised
price $c_{\mathrm{target}}$, the objective is
\begin{linenomath}
\begin{equation}\label{eq:objective}
  f(v) = \ln\left(c(x,v)\right) - \ln\left(c_{\mathrm{target}}\right).
\end{equation}
\end{linenomath}

\begin{Proposition}\label{prop:logc}
For $h = x/v$ and $t = v/2$, the log-price admits the
representation
\begin{equation}\label{eq:logc}
  \ln\left(c\right) = -\tfrac{1}{2}(h^2 + t^2) - \ln\left(2\right) - \tfrac{x}{2}
           + \ln\left(N^+ - N^-\right),
\end{equation}
where
\begin{equation}\label{eq:erfcx}
  N^+ = \erfcx\left(-(h+t)/\sqrt{2}\right), \qquad
  N^- = \erfcx\left(-(h-t)/\sqrt{2}\right).
\end{equation}
\end{Proposition}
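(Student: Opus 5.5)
The plan is a direct algebraic rewriting of the Black formula~\eqref{eq:black}. Each normal cumulative distribution value is expressed through $\erfcx$, and the Gaussian factors are then shown to coincide for both terms. The starting identity is
\[
  \Phi(z) = \tfrac12 \erfc\left(-z/\sqrt{2}\right)
          = \tfrac12 \erfcx\left(-z/\sqrt{2}\right) e^{-z^2/2},
\]
which uses $\erfcx(u)=e^{u^2}\erfc(u)$ with $u=-z/\sqrt2$, so that $u^2=z^2/2$. Apply it with $z=h+t$ and $z=h-t$, where $h=x/v$ and $t=v/2$; these are exactly the arguments appearing in~\eqref{eq:black}. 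This gives
\[
  c = \tfrac12 N^+ e^{-(h+t)^2/2} - \tfrac12 e^{-x} N^- e^{-(h-t)^2/2},
\]
with $N^\pm$ as in~\eqref{eq:erfcx}.

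The key step is to note that $ht = x/2$. Hence
\[
  \tfrac12(h+t)^2=\tfrac12(h^2+t^2)+\tfrac{x}{2}, \qquad
  -x-\tfrac12(h-t)^2=-\tfrac12(h^2+t^2)-\tfrac{x}{2}.
\]
So both terms carry the same exponential factor $e^{-(h^2+t^2)/2-x/2}$. Factoring it out gives
\[
  c = \tfrac12\, e^{-\frac12(h^2+t^2)-\frac{x}{2}}\,\left(N^+-N^-\right).
\]
Taking logarithms then yields~\eqref{eq:logc} term by term.

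The only point needing care, and the one I expect to be the main (mild) obstacle, is justifying that the logarithm of $N^+-N^-$ is well defined. For this I would use that $\erfcx$ is strictly positive and strictly decreasing on $\mathbb{R}$. For $v>0$ we have $t>0$, so $-(h+t)/\sqrt2 < -(h-t)/\sqrt2$, and therefore $N^+>N^-$. This is consistent with $c>0$ for $v>0$.

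No earlier results are needed beyond the definitions in Section~\ref{sec:normalization}. The sign condition $x\le 0$ plays no role in the identity itself, so the representation holds for all real $x$ and all $v>0$.
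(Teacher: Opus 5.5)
Your proof is correct and follows the paper's argument essentially step for step: you rewrite both $\Phi$ terms via $\erfc$ and then $\erfcx$, use $ht=x/2$ to exhibit the common factor $\tfrac12 e^{-(h^2+t^2)/2-x/2}$, and take logarithms. Your extra check that $N^+>N^-$ (strict monotonicity of $\erfcx$ together with $t>0$) is a correct detail that the paper leaves implicit and that is needed for the logarithm to be well defined.
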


\begin{proof}
Using~\eqref{eq:black} and
$\Phi\left(z\right) = \tfrac{1}{2}\erfc\left(-z/\sqrt{2}\right)$ gives
\[
  c = \tfrac{1}{2}\bigl[
    \erfc\left(-(h+t)/\sqrt{2}\right)
    - e^{-x}\,\erfc\left(-(h-t)/\sqrt{2}\right)
  \bigr].
\]
Now write $\erfc\left(z\right) = e^{-z^2}\erfcx\left(z\right)$.  Since $ht=x/2$,
\[
  e^{-(h+t)^2/2}=e^{-(h^2+t^2)/2-x/2},
  \qquad
  e^{-x}e^{-(h-t)^2/2}=e^{-(h^2+t^2)/2-x/2}.
\]
The two terms therefore share the common factor
$e^{-(h^2+t^2)/2-x/2}/2$.  Factoring it out and taking logarithms
gives~\eqref{eq:logc}.
\end{proof}

\begin{Remark}
The scaled complementary error function
$\erfcx\left(z\right) = e^{z^2}\erfc\left(z\right)$ is bounded and positive for all
real~$z$.  Consequently, the difference $N^+ - N^-$ remains computable
even when the price itself would underflow to zero in double precision.
This is the main reason for using the log-price formulation: the
objective~$f(v)$ remains well defined and smoothly differentiable for
arbitrarily deep OTM options.
\end{Remark}

%%%%%%%%%%%%%%%%%%%%%%%%%%%%%%%%%%%%%%%%%%
\section{Solver Design}
\label{sec:solver}

\subsection{Householder Refinement of Order~3}
\label{sec:householder}

We use Householder's method of order~3 (H3), which achieves
\emph{quartic} local convergence at a simple root~\cite{householder1970}.
For the log-price objective the admissible root is simple because
$f'(v)=\phi\left(d_1\right)/c(x,v)>0$.  We use this only as a local refinement
property: once the seed is in the H3 basin, two exact H3 steps are more
than enough for double precision in the tested regimes.  For FlashIV, the fast
pre-step moves the inexpensive Li/asymptotic seed into the local H3 basin.  With
$\eta_n = -f(v_n)/f'(v_n)$ denoting the Newton displacement, the update is
\begin{linenomath}
\begin{equation}\label{eq:h3}
  v_{n+1} = v_n + \eta_n \cdot
  \frac{1 + \tfrac{1}{2}\,\delta_2\,\eta_n}
       {1 + \delta_2\,\eta_n + \tfrac{1}{6}\,\delta_3\,\eta_n^{2}},
\end{equation}
\end{linenomath}
where $\delta_2 = f''/f'$ and $\delta_3 = f'''/f'$ are the
normalised second and third derivatives of the objective.

\begin{Proposition}\label{prop:derivatives}
Let $\ell(v)=\ln(c(x,v))$.  Since
$f(v)=\ell(v)-\ln(c_{\mathrm{target}})$, the derivative ratios used by H3 are
those of~$\ell$.  They admit the closed-form expressions
\begin{align}
  \ell'(v) &= \frac{2/\sqrt{2\pi}}{N^+ - N^-},
  \label{eq:logvega}\\[3pt]
  \frac{\ell''(v)}{\ell'(v)} &= \frac{(h+t)(h-t)}{v} - \ell'(v),
  \label{eq:d2}\\[3pt]
  \frac{\ell'''(v)}{\ell'(v)} &=
    \frac{-3h^2 - t^2 + (h^2-t^2)^2}{v^2}
    - 3\,\ell'(v)\,\frac{\ell''(v)}{\ell'(v)}
    - \bigl[\ell'(v)\bigr]^2.
  \label{eq:d3}
\end{align}
\end{Proposition}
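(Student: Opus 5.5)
We first compute $\ell'$ directly from the Black formula~\eqref{eq:black}. Differentiating $c(x,v)$ in $v$ with $d_{1,2}=x/v\pm v/2$ and using the standard identity $\phi(d_1)=e^{-x}\phi(d_2)$ (which follows from $d_1^2-d_2^2=2x$), the terms involving $\partial_v d_{1,2}$ combine to the vega $c_v=\phi(d_1)$. Hence $\ell'=\phi(d_1)/c$. We then rewrite this with Proposition~\ref{prop:logc}. We have $\phi(d_1)=e^{-(h+t)^2/2}/\sqrt{2\pi}=e^{-(h^2+t^2)/2-x/2}/\sqrt{2\pi}$, and by the proof of Proposition~\ref{prop:logc}, $c=\tfrac12 e^{-(h^2+t^2)/2-x/2}(N^+-N^-)$. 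The common exponential cancels and gives~\eqref{eq:logvega}.

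For the higher ratios we avoid differentiating erfcx. Instead we differentiate the identity $\ell'=\phi(d_1)/c$ itself. Write $\ell'=\exp(g)$ with $g=\ln\phi(d_1)-\ell=-\tfrac12 d_1^2-\ln\sqrt{2\pi}-\ell$. Then $\ell''/\ell'=g'=-d_1\,\partial_v d_1-\ell'$. Now $d_1=h+t$ and $\partial_v d_1=-x/v^2+\tfrac12=(-h+t)/v$, using $x/v^2=h/v$ and $\tfrac12=t/v$. So $-d_1\partial_v d_1=(h+t)(h-t)/v$, which is~\eqref{eq:d2}.

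For~\eqref{eq:d3}, set $A=(h^2-t^2)/v$, so that $\ell''=\ell'(A-\ell')$. Differentiating once more gives
\[
\ell'''=\ell''(A-\ell')+\ell'(A'-\ell'').
\]
Dividing by $\ell'$ and substituting $A=\delta_2+\ell'$, where $\delta_2=\ell''/\ell'$, yields
\[
\ell'''/\ell'=\delta_2(\delta_2+\ell')+A'-\ell'\delta_2=\delta_2^2+A'.
\]
The remaining step is routine calculus. With $h=x/v$ and $t=v/2$ we have $h'=-h/v$ and $t'=t/v$. Then $(h^2-t^2)'=(-2h^2-2t^2)/v$, so
\[
A'=\frac{-2h^2-2t^2-(h^2-t^2)}{v^2}=\frac{-3h^2-t^2}{v^2}.
\]
Finally, we expand $\delta_2^2=(A-\ell')^2=A^2-2A\ell'+\ell'^2$ and substitute $A=\delta_2+\ell'$:
\[
\delta_2^2=A^2-2\ell'\delta_2-\ell'^2 .
\]
This does not yet match the target form $A^2-3\ell'\delta_2-\ell'^2$. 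The main obstacle is reconciling this coefficient of $\ell'\delta_2$, since the paper's stated combination of terms may differ from ours.

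To settle it, we recompute from $\ell'''=(\ell'\delta_2)'=\ell''\delta_2+\ell'\delta_2'$. This gives $\ell'''/\ell'=\delta_2^2+\delta_2'$, with $\delta_2'=A'-\ell''=A'-\ell'\delta_2$. Hence
\[
\ell'''/\ell'=\delta_2^2+A'-\ell'\delta_2=A^2-3\ell'\delta_2-\ell'^2+A'.
\]
This matches~\eqref{eq:d3} exactly, since $A^2=(h^2-t^2)^2/v^2$. The earlier discrepancy came from an algebra slip in our first route, where we wrote $\ell'\delta_2$ in place of $\ell''$.

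In writing this up, we would present the derivation through $\delta_2'=A'-\ell''$ to avoid that slip, and double-check the sign of $A'$.
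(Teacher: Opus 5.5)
Your final derivation is correct and follows the paper's route. You compute the vega $\phi(d_1)$, cancel the common exponential against the erfcx form of $c$, and then differentiate $\ell'=\phi(d_1)/c$ using $dh/dv=-h/v$ and $dt/dv=1/2$; this gives $\ell'''/\ell'=\delta_2^2+\delta_2'$ with $\delta_2'=A'-\ell'\delta_2$ and $A'=(-3h^2-t^2)/v^2$, and that sign is right.

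The one blemish is your abandoned first route, and you misdiagnosed its error. The slip was writing $\delta_2(\delta_2+\ell')$ for $\ell''(A-\ell')/\ell'=\delta_2(A-\ell')=\delta_2^2$. It was not a confusion of $\ell'\delta_2$ with $\ell''$, since those are equal. In the write-up, simply drop that route.
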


\begin{proof}
The vega of the normalised Black formula is
$\partial c / \partial v = \phi\left(d_1\right)$, where
$\phi$ is the standard normal density.  Thus
$\ell'(v) = c^{-1} \phi\left(d_1\right)$.  Expressing
$\phi\left(d_1\right) = (2\pi)^{-1/2} e^{-d_1^2/2}$ and using the
erfcx representation~\eqref{eq:erfcx}, a direct computation
gives~\eqref{eq:logvega}.  The ratios~\eqref{eq:d2}
and~\eqref{eq:d3} follow by differentiating $\ell'(v)$
with respect to~$v$ and simplifying, using
$dh/dv = -h/v$ and $dt/dv = 1/2$.
\end{proof}

\begin{Remark}\label{rem:no-extra-erfcx}
Equations~\eqref{eq:d2}--\eqref{eq:d3} involve only $\ell'(v)$,
$h$, $t$, and~$v$, with no additional erfcx evaluations.  Once $N^+$
and $N^-$ are known from the objective evaluation, all three
derivatives follow from elementary arithmetic ($\sim$25 mul-adds).
This makes H3 only marginally more expensive than Halley
(order~2, cubic convergence) while converging one order faster.
\end{Remark}

\begin{Remark}[Why not Newton?]
Newton's method (order~1, quadratic convergence) requires
$\sim$4 iterations from a typical initial guess, totalling
$4 \times 2 \times 5.5 \approx 44$\,ns of erfcx evaluation
alone.  Halley (order~2, cubic) reduces this to $\sim$2.07
average iterations but occasionally needs a third ($\sim$7\% of
cases).  H3 at the same per-step erfcx cost achieves $\sim$2.01
average iterations with a maximum of~3 and, crucially, enables
the fixed-iteration strategy of Section~\ref{sec:fixed}.
\end{Remark}

\begin{Remark}[Why not higher orders?]
Householder-4 (order~4, quintic convergence) would require
$f^{(iv)}$, adding $\sim$10 mul-adds per step, but saving at most
0.01 iterations on average.  At $\sim$0.3\,ns of extra arithmetic
versus $\sim$0.3\,ns of amortised erfcx savings, the tradeoff is
a wash.
\end{Remark}

\subsection{The Two-Tier erfcx Strategy}
\label{sec:erfcx}

Each H3 step requires two erfcx evaluations (for $N^+$ and~$N^-$).
A full-precision erfcx costs $\sim$6.2\,ns per call (Boost rational
approximation, as implemented in Apache Commons Numbers~1.2) and
delivers 15+ significant digits.  The first step of the solver,
however, does not need 15~digits: its purpose is merely to promote an
initial guess from $\sim$2--3 digits of accuracy to $\sim$6--8
digits, after which the exact polish takes over.

\begin{Definition}[Pre-step erfcx]\label{def:fast-erfcx}
The pre-step erfcx is a composite of two classical approximations:
\begin{enumerate}
  \item For $z \ge 2.5$: the asymptotic expansion
        $\erfcx\left(z\right) \approx
        \frac{1}{\sqrt{\pi}\,z}\bigl(1 - \tfrac{1}{2z^2}
        + \tfrac{3}{4z^4} - \tfrac{15}{8z^6}\bigr)$.
  \item For $0 \le z < 2.5$: the Abramowitz \& Stegun~\cite{abramowitz}
    rational polynomial (formula 7.1.26).  In this form
    $\erfc\left(z\right) \approx p(t)e^{-z^2}$ with
    $t = 1/(1 + 0.3275911z)$, so $\erfcx\left(z\right)$ is evaluated as
    the polynomial $p(t)$ directly.
  \item For $z < 0$: the reflection identity
        $\erfcx\left(z\right) = 2e^{z^2} - \erfcx\left(-z\right)$.
\end{enumerate}
The total cost is $\sim$2.3\,ns per call.
\end{Definition}

\begin{Remark}
The worst-case relative error of the A\&S approximation is
$2.8 \times 10^{-3}$ at $z \approx 2.5$, the junction between the
rational and asymptotic pieces.  Although this is only 2.5 significant
digits, it is more than adequate for the pre-step: the purpose is to
reduce the residual from $O(1)$ to $O(10^{-6})$, not to achieve
machine precision.
\end{Remark}

We also tested a degree-12 Chebyshev polynomial erfcx on $[1.5, 5.0]$
achieving $\sim$9~significant digits ($\sim$3.8\,ns per call).
Despite reducing the average polish iteration count from 2.07
to 2.01, the $\sim$2\,ns extra pre-step cost offsets the
$\sim$2\,ns saved from fewer polish iterations.  The A\&S
approximation sits at the sweet spot of the accuracy--cost tradeoff.

\subsection{The Initial Guess}
\label{sec:guess}

The choice of initial guess has a large impact on total solver time,
not through iteration count (which varies by at most~1 between all
reasonable guesses) but through the \emph{direct cost of computing the
guess itself}.

\subsubsection{Li's rational approximation (2006)}
A degree-$(3,3)$ bivariate rational polynomial~\cite{li2006}:
\begin{linenomath}
\begin{equation}\label{eq:li}
  v_{\text{Li}}(x, c) = \frac{\sum_{i+j \le 3} m_{ij}\, x^i\, c^j}
                              {\sum_{i+j \le 3} n_{ij}\, x^i\, c^j},
\end{equation}
\end{linenomath}
with 20 pre-fitted coefficients.  Valid on the domain
$\mathcal{D}_{\text{Li}} = \{(x, c) : |x| < 3,\;
0.0005 < c < 0.9995\}$.  Cost: $\sim$3.3\,ns.

\subsubsection{The asymptotic OTM guess}
\label{sec:asym}
For deep OTM options ($c \le 0.5$, outside $\mathcal{D}_{\text{Li}}$),
we derive a simple closed-form initial guess from the tail asymptotics
of the Black formula.

\begin{Proposition}\label{prop:asymguess}
For small $c$, the normalised Black call price satisfies
$\ln\left(c\right) \approx -\tfrac{1}{2}\,d_1^2 - \tfrac{1}{2}\ln\left(2\pi\right)$
to leading order.  Defining
$D = \sqrt{-2\ln\left(c\right) - \ln\left(2\pi\right)}$
as an approximation to~$|d_1|$ and solving the quadratic
$d_1 = x/v + v/2$ for~$v$ in rationalised form yields
\begin{equation}\label{eq:asymguess}
  v_0 = \frac{-2x}{D + \sqrt{D^2 - 2x}}.
\end{equation}
\end{Proposition}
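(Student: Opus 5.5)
The plan has two parts: a heuristic derivation of the leading-order tail behaviour, then exact algebra. The tail step is only asymptotic, matching the phrase ``to leading order'' in the statement. For the tail, I will start from the erfcx representation of Proposition~\ref{prop:logc}. In the deep OTM regime with small $c$, both $d_1=h+t$ and $d_2=h-t$ are negative, since $x\le 0$. Using Mills-ratio asymptotics,
\[
c=\Phi(d_1)-e^{-x}\Phi(d_2)\approx \phi(d_1)\Bigl(\tfrac{1}{|d_1|}-\tfrac{1}{|d_2|}\Bigr),
\]
where I use the identity $e^{-x}\phi(d_2)=\phi(d_1)$, which is the same $ht=x/2$ identity used in the proof of Proposition~\ref{prop:logc}. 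Taking logarithms gives
\[
\ln c=-\tfrac12 d_1^2-\tfrac12\ln(2\pi)+\ln\bigl(|d_1|^{-1}-|d_2|^{-1}\bigr).
\]
The last term grows only logarithmically in $|d_1|$, while $d_1^2$ grows quadratically. Dropping it gives the stated leading-order relation, and inverting yields $|d_1|\approx D$. I will remark that this requires $-2\ln c>\ln(2\pi)$, so that $D$ is real. That condition holds in the intended regime $c\le 0.5$ outside $\mathcal{D}_{\text{Li}}$, or the guess is guarded there.

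For the algebraic step, I must choose the correct sign. For an OTM call with $v$ small relative to $\sqrt{-2x}$, we have $d_1=x/v+v/2<0$, so I set $d_1=-D$. Multiplying $x/v+v/2=-D$ by $2v$ gives the quadratic
\[
v^2+2Dv+2x=0.
\]
Its roots are $v=-D\pm\sqrt{D^2-2x}$. Because $x\le0$, the discriminant is at least $D^2$, so exactly one root is nonnegative, namely $v=\sqrt{D^2-2x}-D$. Multiplying numerator and denominator by the conjugate $\sqrt{D^2-2x}+D$ turns the numerator into $(D^2-2x)-D^2=-2x$, which gives~\eqref{eq:asymguess}. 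I will also note why the rationalised form is preferred: it avoids cancellation when $D^2\gg|x|$, and it returns $v_0=0$ cleanly at $x=0$.

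The main obstacle is making ``to leading order'' honest. I will not prove a quantitative error bound. Instead I will state that the neglected term $\ln(|d_1|^{-1}-|d_2|^{-1})$ is $O(\ln|d_1|)$, which makes the relative error in $D$ as an estimate of $|d_1|$ equal to $O(\ln D/D^2)$. That accuracy suffices for a seed that the pre-step and H3 refinement then polish. I also need to note that the choice of the lower branch $d_1<0$, rather than the complementary high-volatility branch, matches the region where the guess is used. The near-ATM and high-price guards mentioned in Section~\ref{sec:intro} handle the other regimes.
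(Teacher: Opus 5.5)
Your proposal is correct and follows essentially the paper's route: Mills-ratio tail asymptotics, dropping the logarithmic correction to get $|d_1|\approx D$, taking the nonnegative root of the quadratic with $d_1=-D$, and rationalising. It is slightly more careful than the paper's proof in two places. First, the paper approximates $c$ by $\Phi(d_1)$ alone, although $e^{-x}\Phi(d_2)$ nearly cancels that term when $v\ll|d_1|$; this only changes the discarded logarithmic term. Second, the paper writes the quadratic as $v^2/2+d_1v+x=0$ with a sign slip, whereas your $v^2+2Dv+2x=0$ and conjugate $\sqrt{D^2-2x}+D$ lead directly to \eqref{eq:asymguess}.
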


\begin{proof}
For small $c$, the Black formula~\eqref{eq:black} is dominated
by the first term: $c \approx \Phi\left(d_1\right) \approx
\phi\left(d_1\right) / |d_1|$.  Taking logarithms,
$\ln\left(c\right) \approx -\tfrac{1}{2} d_1^2
  - \tfrac{1}{2}\ln\left(2\pi\right) + \ln\left(1/|d_1|\right)$.
Dropping the $\ln\left(1/|d_1|\right)$ correction (which is $O(\ln\left(\ln\left(1/c\right)\right))$
relative to the $O(\ln\left(1/c\right))$ leading term) gives
$d_1 \approx -D$.  The quadratic $v^2/2 + d_1 v + x = 0$ (from
$d_1 = x/v + v/2$) has the positive root
$v = -d_1 + \sqrt{d_1^2 - 2x}$.  Substituting $d_1 \approx -D$
and rationalising (multiplying numerator and denominator by
$D - \sqrt{D^2 - 2x}$) yields~\eqref{eq:asymguess}, which avoids
catastrophic cancellation when $|x| \ll D^2$.
\end{proof}

\begin{Remark}
The cost of~\eqref{eq:asymguess} is $\sim$3.4\,ns: one subtraction,
two square roots, and one division, with $\ln\left(c\right)$ already available
from the log-space objective setup.  The approximation is 2--3 digits
accurate for moderate OTM and improves monotonically as $c \to 0$.
\end{Remark}

\begin{Remark}[Validity guard]\label{rem:guard}
The leading-order approximation degrades when $c$ is not small
enough for the tail asymptotics to dominate.  We impose the guard
$\ln\left(c\right) < -2$ (i.e., $c < e^{-2} \approx 0.135$) in addition to
$c \le 0.5$, preventing activation for high-volatility, long-maturity
cases where $c \approx 0.5$ despite deep OTM ($|x| > 3$).
\end{Remark}

\subsubsection{The hybrid Li+asymptotic dispatch}
The solver-side dispatch around the Li/asymptotic guess is as follows.
In the floating-point implementation the high-price trigger is
$c_g=\operatorname{nextDown}(0.99)$, i.e., the representable double
immediately below 0.99; analytically this is the same upper-polish
guard, but it avoids sending a one-ulp-rounded boundary price back to
the direct log-price region.
\begin{enumerate}
    \item[(i)] $c \le 10^{-6}$, $|x| \le 10^{-8}$, and $x \le 0$:
      Bachelier-limit branch.  In the sqrt-forward normalisation
      $\beta=c e^{x/2}$ with log-moneyness gap $m=-x$, the Black price
      is expanded around the Bachelier integral.  The deep tail
      ($\ln\left(m/\beta\right) > 20$ and $m/v > 4$) is handled by a scaled
      Mills-ratio solver in the normal model, returning the volatility
      directly.  Otherwise, the Le Floc'h rational Bachelier
      approximation~\cite{lefloc2016basispoint} provides a
      Bachelier-normal seed accurate to $\sim 10^{-14}$, polished by
      two Newton corrections on the local Black--Bachelier expansion.
      If both guarded paths decline, the branch returns the zero-volatility
      limit as a defensive terminal value; this path was not reached in the
      validation grids because the deep-tail solver covers ratios below the
      rational approximation's finite domain.  The guarded branch is tested before
      the general initial-guess dispatch.  The full derivation of this shared
      microscopic branch is given in the ThiopheneIV paper~\cite[Appendix~A.3]{lefloc2026thiophene}.
  \item[(ii)] $|x| < 3$ and $0.0005 < c < c_g$: Li rational
      ($\sim$4\,ns).
  \item[(iii)] $c \le 0.0005$ and $|x| < 0.01$: a near-ATM small-price
      seed $v_0 = \sqrt{x^2 + 2\pi c^2}$, which reduces to the
      first-order ATM relation $c \approx v/\sqrt{2\pi}$.
    \item[(iv)] $c \ge c_g$: the current seed is lifted by an upper-price
      asymptotic seed based on $q=1-c$, followed by two Householder steps on the
      complementary $\ln\left(q\right)$ objective.
    \item[(v)] $c \le 0.5$ and $\ln\left(c\right) < -2$: asymptotic OTM
      formula~\eqref{eq:asymguess} ($\sim$4.1\,ns).
  \item[(vi)] Otherwise: put-side asymptotic or $\sqrt{2|x|}$ fallback
      ($< 1$\,ns; rare).
\end{enumerate}

Table~\ref{tab:guessdomain} reports the guess-domain distribution
across all eight benchmark datasets for cases that reach the general
initial-guess dispatch.  The Bachelier-limit branch is part of the
high-level dispatch, but it is a terminal microscopic-price branch
rather than an initial-guess domain.

\begin{table}[!htbp]
\caption{Guess-domain distribution by dataset.\label{tab:guessdomain}}
\begin{tabularx}{\textwidth}{lCCCC}
\toprule
\textbf{Dataset} & \textbf{Cases} & \textbf{Li (\%)} &
\textbf{Asym (\%)} & \textbf{Fallback (\%)} \\
\midrule
CLY-3D    & \num{51321} & 58.5 & 41.5 & 0.0 \\
CLY-20    & \num{1600}  & 76.0 & 24.0 & 0.0 \\
CLY-80    & \num{1600}  & 84.1 & 15.9 & 0.0 \\
J\"ackel  & \num{5182}  & 72.8 & 27.2 & 0.0 \\
Market    & \num{7151}  & 80.3 & 19.7 & 0.0 \\
Corners   & 278         & 22.3 & 74.8 & 2.9 \\
Stress    & \num{1270}  & 50.6 & 49.1 & 0.4 \\
HighVol   & 149         &  0.0 & 49.7 & 50.3 \\
\bottomrule
\end{tabularx}
\end{table}

On the CLY-3D benchmark, the Li rational guess covers 58.5\% of cases
and the asymptotic OTM guess covers 41.5\%, with the $\sqrt{2|x|}$
fallback never invoked.  The measured average Li/asymptotic guess cost is
$\sim$4.8\,ns.

Figure~\ref{fig:branch-domain} shows the corresponding dispatch geometry
in normalised OTM coordinates.  The main panel uses a logit scale for $c$,
with tick labels shown as prices, so that the high-price guard at $c\ge0.99$
is visible rather than compressed against the top edge.  The microscopic
Bachelier-limit guard remains too narrow to see at that scale, so it is
resolved in the right-hand zoom panel.

\begin{figure}[!htbp]
\centering
\includegraphics[width=\textwidth]{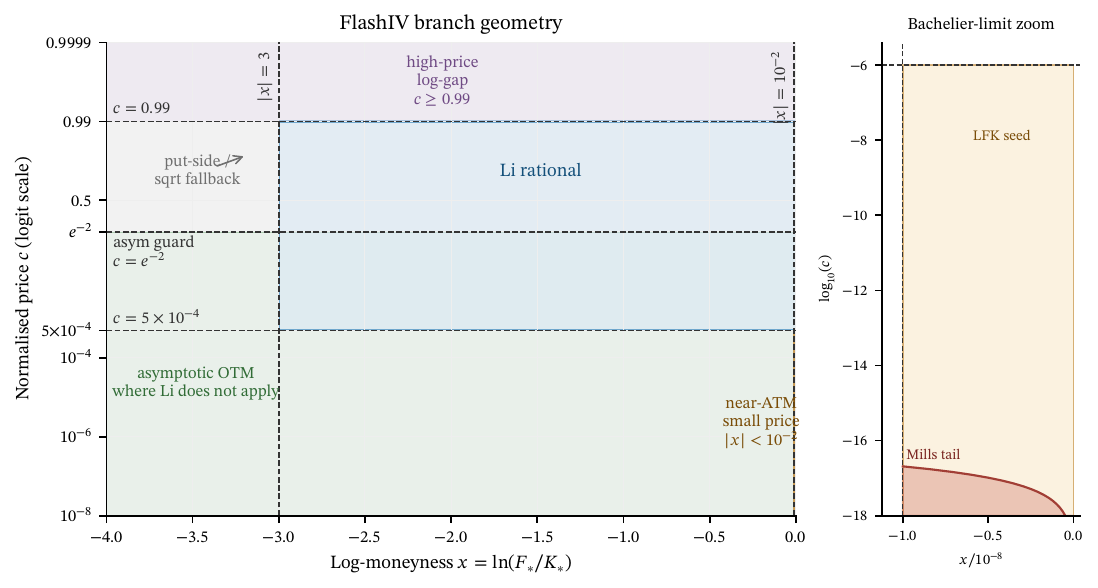}
\caption{FlashIV branch geometry in normalised OTM coordinates.  The vertical
axis uses a logit scale for the normalised OTM price $c$.  Pale
regions show the high-price complementary log-gap, Li rational, near-ATM small-price, and
asymptotic OTM regimes; the pale gray remainder is the inexpensive
put-side-asymptotic or $\sqrt{2|x|}$ fallback region.  The left gray zone is
therefore not the Bachelier-limit branch.  The $|x|=10^{-2}$ line is the
activation guard for the near-ATM small-price seed.  The right panel resolves the
microscopic Bachelier-limit box,
where the Mills-ratio tail treatment covers the deepest prices and the
rational Bachelier seed covers the transition region.  The $c=e^{-2}$ line is
the validity guard for activating the asymptotic OTM seed outside the Li domain.\label{fig:branch-domain}}
\end{figure}

\subsubsection{StrontiumIV: guarded Stefanica---Radoi\v{c}i\'{c} (SR) seeded companion}
\label{sec:sr-global}
Stefanica and Radoi\v{c}i\'{c}~\cite{sr2017} derived a closed-form global approximation based on the
P\'{o}lya bound for~$\Phi$.  It handles all moneyness regimes via
careful case analysis. Our implementation uses a compact guarded
form of the SR approximation, with Taylor-limit branches at singular
boundaries.

StrontiumIV is an implied volatility solver based on the 
SR seed plus Householder iterations on the log-price, or $\ln(c)$, objective used
by Healy~\cite{healy2024}, made robust by the same normalisation, microscopic
Bachelier guard, and high-price complementary guard used here.  It removes the
Li/asymptotic dispatch and uses the SR approximation~\cite{sr2017} as the
universal non-Bachelier seed before the same fixed H3 refiner used by FlashIV.
The trade-off is direct: the guarded SR seed costs 26.2\,ns on the CLY-3D case
mix, against the 4.8\,ns average of the Li/asymptotic hybrid.  In
Table~\ref{tab:accuracy}, StrontiumIV runs at 179\,ns on CLY-3D versus 137\,ns
for FlashIV.

\subsubsection{Accuracy of the raw initial guesses}
\label{sec:guess-accuracy}

Before any iterative refinement, the three seeds have quite
different accuracy profiles.  Table~\ref{tab:guessaccuracy} reports
relative error in total volatility,
$|v_0-v_{\rm ref}|/v_{\rm ref}$, over the union of the eight benchmark
datasets.  The benchmark uses the same dataset construction as
Table~\ref{tab:accuracy}.

\begin{table}[!htbp]
\caption{Pure initial-guess accuracy before any H3/cubic refinement.
Errors are relative errors in total volatility over all \num{68551}
benchmark cases.  The cost column is the CLY-3D micro-benchmark cost from
Table~\ref{tab:costs}.\label{tab:guessaccuracy}}
\begin{tabularx}{\textwidth}{lCCCCC}
\toprule
\textbf{Seed} & \textbf{Cost (ns)} & \textbf{Mean} & \textbf{Median}
  & \textbf{95th pct.} & \textbf{Max} \\
\midrule
FlashIV Li/asym       & 4.8  & $1.38{\times}10^{-1}$ & $1.25{\times}10^{-1}$
  & $2.84{\times}10^{-1}$ & $3.05{\times}10^{1}$ \\
StrontiumIV SR        & 26.2 & $3.35{\times}10^{-2}$ & $3.52{\times}10^{-2}$
  & $5.75{\times}10^{-2}$ & $1.21{\times}10^{-1}$ \\
ThiopheneIV Choi L3   & 17.7 & $1.57{\times}10^{-1}$ & $1.60{\times}10^{-1}$
  & $2.82{\times}10^{-1}$ & $3.57{\times}10^{-1}$ \\
\bottomrule
\end{tabularx}
\end{table}

The SR seed is therefore the most accurate raw approximation in this
benchmark: even its worst relative total-volatility error is about
12\%.  FlashIV's seed is deliberately cheaper and rougher:
its large maximum comes from a handful of extreme Corners and Stress
cases where the asymptotic seed is far from the true low total
volatility.  Those cases are precisely why the solver is assessed after
the solver-specific log-price refinement rather than by seed accuracy alone.

\subsection{Eliminating the Convergence Loop}
\label{sec:fixed}

Traditional Householder solvers iterate until a convergence criterion
is met, checking residuals at every step.  This introduces overhead
from branch misprediction (the loop exit is data-dependent), the
convergence check itself, and lost opportunities for compiler
optimisation (the loop body cannot be fully unrolled or scheduled
across iterations).

We measured this overhead at $\sim$19\,ns by comparing two otherwise matched
prototype paths, one with a convergence loop after the fast pre-step
(160\,ns) and one with fixed iterations (141\,ns), on the CLY-3D dataset.

Instead of looping, we execute exactly \textbf{two} unconditional
exact H3 steps.  A \textbf{conditional third step} fires only when the
log-price residual on entry to the second exact step is still large:
\begin{linenomath}
\begin{equation}\label{eq:safety}
  \text{if } |f(v_1)| \ge 10^{-4}: \quad\text{execute 3rd H3 step after } v_2.
\end{equation}
\end{linenomath}

\begin{Proposition}
On the CLY-3D benchmark (\num{51321} cases), the conditional step fires
for 11 cases (0.02\%), exclusively in the deepest OTM zones
($|x| > 2$), adding $< 0.01$\,ns in amortised cost.
\end{Proposition}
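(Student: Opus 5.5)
This is an empirical statement about a fixed, finite benchmark, so the proof is an exhaustive instrumented run, supported by a short analytic argument for why the firing cases must sit in the deep-OTM region.

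\emph{Exhaustive count.} Enumerate all \num{51321} CLY-3D inputs and apply the normalisation~\eqref{eq:normalise}. Discard the microscopic Bachelier-limit cases, which terminate before the H3 refiner. For every remaining case, run the production path exactly:
\begin{itemize}
\item the Li/asymptotic seed of Section~\ref{sec:guess};
\item the pre-step with the erfcx of Definition~\ref{def:fast-erfcx};
\item the first exact H3 step, giving $v_1$.
\end{itemize}
Then record the indicator $|f(v_1)|\ge 10^{-4}$ from~\eqref{eq:safety}, together with $x$, $c$, and the guess branch. Everything is deterministic double-precision arithmetic, so the count is reproducible. The claim ``11 cases'' is checked by summing the indicator. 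The claim ``$|x|>2$'' is checked by taking the minimum of $|x|$ over the flagged set.

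\emph{Why the flagged cases are deep OTM.} Write $\epsilon_0$ for the log-price error of the seed. Quartic convergence of H3 at the simple root, using $f'>0$ from Proposition~\ref{prop:derivatives}, gives $|f(v_1)|\lesssim K\,\epsilon_1^4$, where $\epsilon_1$ is the error after the pre-step. The pre-step error $\epsilon_1$ is limited by the $2.8\times10^{-3}$ relative error of the A\&S erfcx. Its effect on $\ln(N^+-N^-)$ is amplified when $N^+-N^-$ suffers cancellation. By~\eqref{eq:erfcx}, that cancellation happens when $|h|\gg t$, which is the deep-OTM, low-$v$ regime where $|x|$ is large. In the same regime the curvature ratios in~\eqref{eq:d2}--\eqref{eq:d3} grow like $h^2/v^2$, which enlarges $K$. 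Both effects point to $|x|>2$. This argument only explains the empirical localisation; it is not a replacement for the count.

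\emph{Amortised cost.} Take the measured cost of one extra exact H3 step, about $2\times 6.2$\,ns for the two erfcx calls plus about $1$\,ns of arithmetic, roughly $14$\,ns. Multiply by the firing rate $11/51321\approx 2.1\times10^{-4}$. This gives about $3\times10^{-3}$\,ns, which is below $0.01$\,ns. The always-executed branch test $|f(v_1)|\ge10^{-4}$ is essentially never mispredicted at this firing rate, so it adds no further cost. An A/B timing with the third step disabled should agree within noise.

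\emph{Main obstacle.} The hardest part is making the count robust rather than fragile. A flagged case with $|f(v_1)|$ close to $10^{-4}$ could flip under different erfcx rounding or compiler FMA contraction. I would therefore report the margins of the 11 flagged cases and of the nearest unflagged cases, and verify that no case lies within a few ulps-scaled relative distance of the threshold.
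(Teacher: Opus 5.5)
Your proposal is correct and is essentially what the paper does: the proposition is a measured property of the fixed CLY-3D grid, with no analytic proof behind it, so instrumenting the production path, counting the cases with $|f|\ge10^{-4}$ on entry to the second exact H3 step, and checking $\min|x|>2$ over the flagged set is exactly the supporting evidence. Two small repairs: for the amortised cost, use the measured exact-H3 step cost of 31.6\,ns from Table~\ref{tab:costs} rather than your $\sim$14\,ns bottom-up estimate, which omits the logarithm and per-step overhead; this gives $11\times31.6/51321\approx6.8\times10^{-3}$\,ns, still below $0.01$\,ns. Also drop the cancellation heuristic, because the $10^{-20}$ price floor forces $|h|\lesssim10$, so $|h|/t=2h^2/|x|$ and $h^2/v^2=h^4/x^2$ are largest at the \emph{smallest} $|x|$ and both of your effects point away from $|x|>2$ (the paper instead attributes the firing cases to an imprecise asymptotic seed in the deepest tail), leaving the localisation claim resting entirely on the count.
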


\begin{Remark}[Why two steps suffice]
The pre-step (with the fast erfcx) promotes the initial guess from
$\sim$2--3 digits to $\sim$6--8 digits.  The first exact H3 step
(quartic convergence) then yields $\sim$24--32 digits, well beyond
double precision.  The second exact step drives the residual to the
limits of IEEE~754 arithmetic ($\sim 10^{-16}$).  Only in pathological
cases where the pre-step achieves fewer than 4 digits (very deep OTM
with an imprecise asymptotic guess) is a third step needed.
\end{Remark}

Figure~\ref{fig:fixed-x-convergence} visualises this fixed-count
compression on fixed log-moneyness slices, in the style of J\"ackel's
diagnostic plots.  The raw seed errors are broad, the fast pre-step moves
the slices into the local H3 basin, and the exact steps collapse the error
to the double-precision floor.

\begin{figure}[!htbp]
\centering
\includegraphics[width=\textwidth]{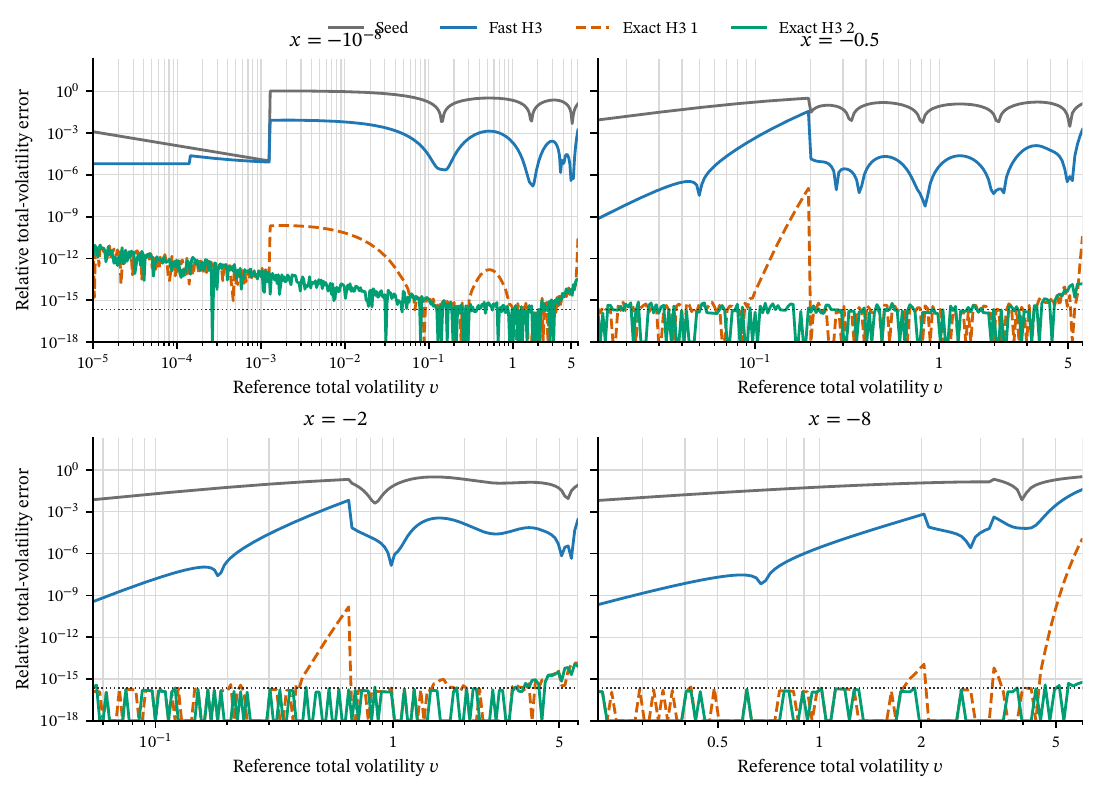}
\caption{Fixed-$x$ convergence slices for the FlashIV H3 chain.  Each
panel fixes log-moneyness and varies the reference total volatility.  The
curves show relative total-volatility error after the seed, the fast H3
pre-step, and the first two exact H3 steps; values are floored at
$10^{-18}$ only for log-scale plotting.\label{fig:fixed-x-convergence}}
\end{figure}

\subsection{Algorithm Summary}
\label{sec:algorithm-summary}

Algorithm~\ref{alg:hotpath} summarises the FlashIV hot path.  The companion
StrontiumIV and ThiopheneIV rows in the experiments share the same
normalisation, microscopic-price guard, and high-price complementary guard; the shared
guard derivations are in the ThiopheneIV paper~\cite[Appendices~A.3 and~A.4]{lefloc2026thiophene},
while the seed/refinement internals differ by solver.

\begin{algorithm}[H]
\caption{FlashIV hot path.\label{alg:hotpath}}
\begin{algorithmic}[1]
\Require $x\le0$, $e^x$, normalised OTM price $0<c<1$, $\ln\left(c\right)$, maturity $T$
\If{$c\le 10^{-6}$ and $|x|\le 10^{-8}$}
  \State \Return Bachelier-limit volatility (Mills-ratio tail, Bachelier rational seed + expansion polish, or defensive zero)
\EndIf
\State $v_0 \gets$ Li/asymptotic dispatch of Section~\ref{sec:guess}
\State $v_0 \gets \max(v_0,10^{-10})$
\If{$c\ge0.99$}
  \State $v_0 \gets \max(v_0,\text{upper-price asymptotic seed})$
  \State \Return three Halley steps on the complementary $\ln(1-c)$ objective
\EndIf
\State $v_1 \gets$ one H3 step using the fast A\&S erfcx
\State $v_2 \gets$ one exact H3 step using full-precision erfcx
\State $r_2 \gets$ log-price residual at $v_2$
\State $v_3 \gets$ one exact H3 step using full-precision erfcx
\If{$|r_2|\ge10^{-4}$}
  \State $v_3 \gets$ one additional exact H3 safety step
\EndIf
\State \Return $v_3/\sqrt{T}$
\end{algorithmic}
\end{algorithm}

\subsection{Scalar Inlining of Derivative Evaluations}
\label{sec:scalar}

This subsection documents an implementation detail that we separate from the
algorithmic contributions.

The direct implementation of the Householder iteration computes the
objective and its three derivatives in a helper function that returns
them as a four-element container.  Replacing that container return with scalar
temporaries removes the representation overhead without changing the iteration.
In the benchmark, this reduces per-case time from 210\,ns to 171\,ns
($-19$\%).  Since this is a representation choice rather than a solver design
choice, the algorithmic comparisons below use the scalar-inlined baseline.

%%%%%%%%%%%%%%%%%%%%%%%%%%%%%%%%%%%%%%%%%%
\FloatBarrier
\section{Experimental Setup}
\label{sec:setup}

\subsection{Reference Price Computation}
\label{sec:refprice}
The textbook Black--Scholes formula
$c = \Phi\left(d_1\right) - e^{-x} \Phi\left(d_2\right)$ suffers from catastrophic
cancellation for deep OTM options, and even the erfcx/log objective used by the
unpolished solver can lose relative bits in nearly-ATM, very-low-volatility
cases because it subtracts two close erfcx values \cite{lefloc2026thiophene}.  
The main accuracy table uses saved OTM-normalised prices generated from a
multiprecision Black evaluator at the known reference volatility and then
rounded to double precision.  
%The fixed-$x$ comparison, polish diagnostic, and Pareto view instead use J\"ackel's cancellation-avoiding normalised-Black price, i.e. the expanded small-volatility evaluation used by \emph{Let's Be Rational}. We call this quantity the expanded J\"ackel reference price below; it is the quantity that the optional final Newton correction matches.

\subsection{Test Datasets}
\label{sec:datasets}

To ensure robustness, we benchmark on eight datasets spanning the
full range of practically relevant option parameters
(Table~\ref{tab:datasets}).

\begin{table}[!htbp]
\caption{Test datasets.\label{tab:datasets}}
\begin{tabularx}{\textwidth}{lrX}
\toprule
\textbf{Dataset} & \textbf{Cases} & \textbf{Description} \\
\midrule
CLY-3D    & \num{51321} & Cui--Liu--Yao three-dimensional grid:
            $K \in [105,800]$, $T \in [0.01,2]$,
            $\sigma \in [0.01,0.99]$, filtered at price $10^{-20}$ \\
CLY-20    & \num{1600} & Cui--Liu--Yao fixed-volatility surface:
            $\sigma=20\%$, $K \in [105,180]$, $T \in [0.1,2]$ \\
CLY-80    & \num{1600} & Cui--Liu--Yao fixed-volatility surface:
            $\sigma=80\%$, $K \in [105,800]$, $T \in [0.1,2]$ \\
J\"ackel  & \num{5182}  & Wide moneyness: $K/F \in [0.5, 8]$,
            $\sigma$ up to 4.0 \\
Market    & \num{7151}  & Realistic: $K/F \in [0.7, 1.5]$,
            $T$ from 1/252 to 5\,yr \\
Corners   & 278         & Edge cases: low-vol/short-mat,
            high-vol/deep-OTM, near-ATM small-price, and high-price near-bound cases \\
Stress    & \num{1270}  & Extremes: $K/F$ up to $100\times$,
            $T \in [0.001, 10]$ \\
HighVol   & 149         & Fallback stress zone: $|x| \ge 3$,
            $c \in (0.05, 0.95)$, $\sigma$ up to 2.5 \\
\bottomrule
\end{tabularx}
\end{table}

The first three datasets reproduce the comparison grids from
Cui--Liu--Yao~\cite{cui2025}: the main three-dimensional grid and the
two fixed-volatility surface grids used in their numerical comparison
with previous literature.  The HighVol dataset
specifically targets the region $|x| \ge 3$ with large option prices,
where Li's rational approximation is outside its fitted domain and
the asymptotic OTM guard only accepts part of the set; cases are
filtered to $c < 0.95$ to reflect the practical use of put--call
parity for deep ITM options.
The accompanying source distribution gives the exact construction rules used by
the benchmark generator.

\subsection{Solvers Under Comparison}
\label{sec:solvers}

The main comparison uses only production-facing rows.  FlashIV and FlashIV+
are the methods studied here; J\"ackel is the reference solver; StrontiumIV,
ThiopheneIV, and ThiopheneIV+ are compact comparison rows that contextualise
alternative seed and polish choices.  Intermediate development variants are not
part of the main comparison because the anatomy section below already explains
the performance components.
All micro-benchmark timings use the same
harness: minimum of 500 sweeps and 3 independent runs in the implementation.

\begin{itemize}
  \item \textbf{J\"ackel's \emph{Let's Be Rational}~\cite{jaeckel2017}.}
        Region-dependent asymptotic expansions, log-space iteration,
      complementary objective.  The comparison uses the normalised API
        with $\beta=c\sqrt{e^x}$ and the original two-iteration default.
        243\,ns on CLY-3D.

  \item \textbf{FlashIV (this paper).}
        Li/asymptotic seed dispatch, one fast H3 pre-step, two exact H3 steps,
        and a rare conditional safety step.  137\,ns on CLY-3D.

  \item \textbf{FlashIV+.}
        FlashIV with one final J\"ackel--Newton correction against the expanded
    J\"ackel reference price, except under the upper polish guard $c\ge0.99$,
    where the complementary gap branch is retained.  181\,ns on CLY-3D.

  \item \textbf{StrontiumIV.}
        Guarded SR approximation~\cite{sr2017} for all non-Bachelier
        inputs, with no Li coefficients and no Li/asymptotic guess-domain
      switching, followed by Healy's SR--Householder-on-log-price idea
      with the same robust guard layer and fixed H3 polish as FlashIV.
      179\,ns on CLY-3D.

  \item \textbf{ThiopheneIV.}
        Choi--Huh--Su L3 lower-bound seed~\cite{choi2023} for all
      non-Bachelier inputs, followed by lower-tail Euler--Chebyshev or
      upper-tail Halley steps and the same production guards.  166\,ns on CLY-3D.

  \item \textbf{ThiopheneIV+.}
        ThiopheneIV with one final J\"ackel--Newton correction against the
      expanded J\"ackel reference price on the lower-price half $c\le1/2$.
      214\,ns on CLY-3D.
\end{itemize}
%%%%%%%%%%%%%%%%%%%%%%%%%%%%%%%%%%%%%%%%%%
\FloatBarrier
\section{Results}
\label{sec:results}

\subsection{Accuracy and Timing}

Table~\ref{tab:accuracy} reports the main accuracy and latency comparison on
the eight benchmark datasets.  Input prices for the accuracy block are generated
from a multiprecision Black price at the known total volatility
$v_{\rm ref}$, rounded to double precision, and then passed to the solvers as OTM-normalised prices.  Errors are
measured in ulps of the reference total volatility: for a solver output
$\hat v=\hat\sigma\sqrt{T}$, the per-case error is
\[
  \frac{|\hat v-v_{\rm ref}|}{\operatorname{nextUp}(v_{\rm ref})-v_{\rm ref}}.
\]
Thus one ulp is the local double-precision spacing at $v_{\rm ref}$; at
$v_{\rm ref}=0.20$, a 7-ulp error is about $1.94\times10^{-16}$ in total
volatility.  The table is best read through the FlashIV, FlashIV+, and J\"ackel
rows: FlashIV is the low-latency row, FlashIV+ adds one Newton step below
the upper polish guard $c<0.99$, and J\"ackel is the reference
solver.  StrontiumIV, ThiopheneIV, and ThiopheneIV+ provide compact context for
alternative seed/refinement choices.

\begin{table}[!htbp]
\caption{Accuracy against rounded multiprecision Black reference prices and latency
by dataset.\label{tab:accuracy}}
\scriptsize
\setlength{\tabcolsep}{4pt}
\begin{tabularx}{\textwidth}{lCCCCCCCC}
\toprule
& \textbf{CLY-3D} & \textbf{CLY-20} & \textbf{CLY-80}
& \textbf{J\"ackel} & \textbf{Market} & \textbf{Corners}
& \textbf{Stress} & \textbf{HighVol} \\
\midrule
\multicolumn{9}{l}{\textbf{Accuracy -- max error (ulp of reference total volatility)}} \\[2pt]
J\"ackel      &  23 &  5 & 4 &  25 &  29 & 240 &  33 & 2 \\
FlashIV       & 130 & 13 & 4 &  93 & 304 & 329 & 208 & 7 \\
FlashIV+      &  23 &  5 & 5 &  13 &  29 &  41 &  33 & 3 \\
StrontiumIV   & 116 & 19 & 4 & 106 & 117 & 329 & 321 & 9 \\
ThiopheneIV   & 133 & 62 & 7 &  89 & 177 & 329 & 138 & 2 \\
ThiopheneIV+  &  24 &  5 & 5 &  13 &  29 &  41 &  33 & 2 \\[4pt]
\multicolumn{9}{l}{\textbf{FlashIV+ accuracy -- max absolute total-volatility error}} \\[2pt]
FlashIV+
  & $6.7{\times}10^{-16}$ & $1.1{\times}10^{-16}$
  & $6.7{\times}10^{-16}$ & $1.2{\times}10^{-14}$
  & $1.8{\times}10^{-15}$ & $2.7{\times}10^{-15}$
  & $8.9{\times}10^{-16}$ & $2.7{\times}10^{-15}$ \\[4pt]
\multicolumn{9}{l}{\textbf{Latency (ns/call)}} \\[2pt]
J\"ackel      & 243 & 225 & 225 & 224 & 204 & 261 & 232 & 203 \\
FlashIV       & 137 & 137 & 136 & 144 & 145 & 148 & 147 & 142 \\
FlashIV+      & 181 & 181 & 184 & 199 & 194 & 202 & 199 & 200 \\
StrontiumIV   & 179 & 175 & 174 & 181 & 183 & 178 & 179 & 186 \\
ThiopheneIV   & 166 & 164 & 161 & 171 & 171 & 173 & 175 & 169 \\
ThiopheneIV+  & 209 & 209 & 211 & 209 & 216 & 222 & 224 & 207 \\
\bottomrule
\end{tabularx}
\end{table}

The optional final Newton correction changes the trade-off rather than the
solver architecture.  On the CLY-3D grid, FlashIV moves from 130 to 23
total-volatility ulps while remaining faster than J\"ackel's unpolished
comparison path.  On the J\"ackel grid, the high-price guard is decisive:
FlashIV+ reports 13 ulps because cases with $c\ge0.99$ use the complementary
$\ln(1-c)$ objective instead of a direct beta-price Newton polish.  The same
guard reduces the base FlashIV and StrontiumIV J\"ackel maxima to 93 and
106 ulps, respectively.  The Corners dataset is the most demanding because it includes
cancellation-sensitive near-ATM small-price cases and high-price near-bound
cases; the FlashIV+ absolute-error row shows that the 41-ulp maximum there corresponds to a
$2.7\times10^{-15}$ total-volatility error.  ThiopheneIV implementation details
and the analogous lower-Chebyshev / upper-Halley comparison are given in
\cite[Sections~3 and~5]{lefloc2026thiophene}.

Figure~\ref{fig:jaeckel-comparison-slice} gives a fixed-log-moneyness
slice check.  Prices below $10^{-300}$ are excluded, matching the
benchmark-generation floor used in the test sets.

\begin{figure}[!htbp]
\centering
\includegraphics[width=\textwidth]{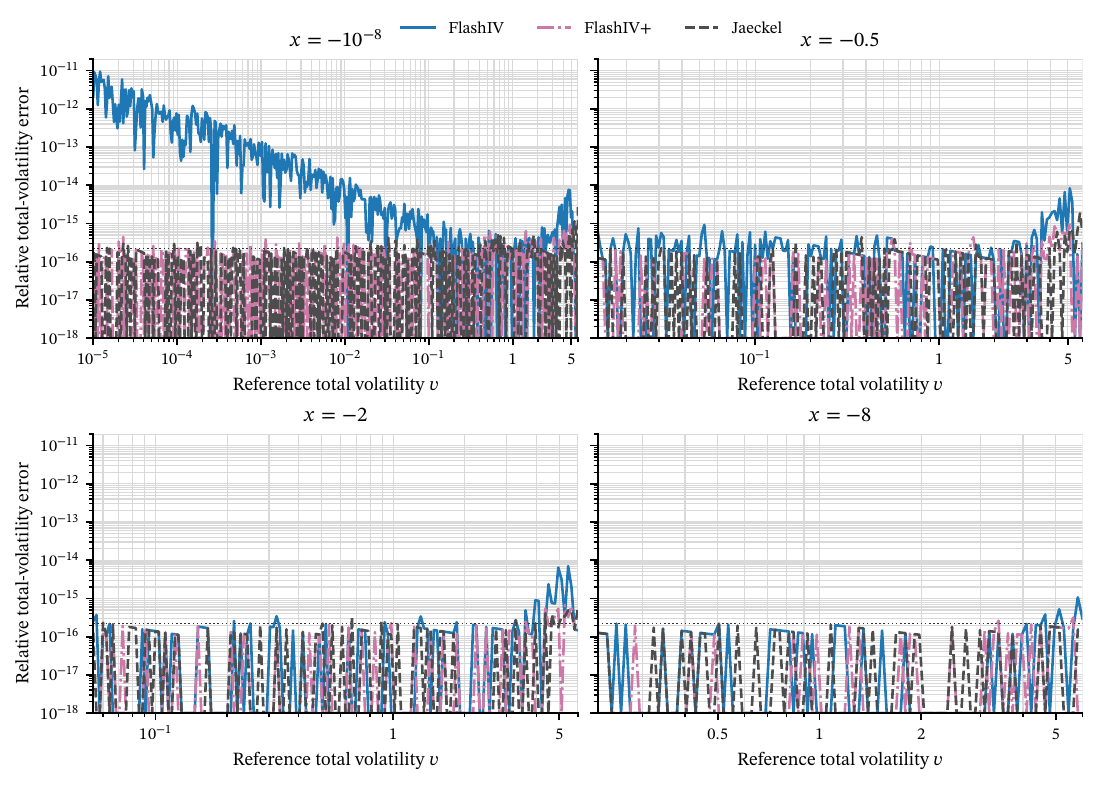}
\caption{Fixed-$x$ comparison of FlashIV, FlashIV+, and J\"ackel's normalised
solver using expanded J\"ackel reference prices.
Each panel varies the reference total volatility at fixed log-moneyness
and reports relative total-volatility error against the known
$v_{\rm ref}$.  FlashIV+ collapses most of the remaining discrepancy against the
expanded J\"ackel reference price while preserving the fixed-count FlashIV path
plus a guarded final Newton correction.\label{fig:jaeckel-comparison-slice}}
\end{figure}

The larger FlashIV error in the top-left panel
($x=-10^{-8}$, $v\lesssim10^{-3}$) is a pricing-formula limit, not an H3
convergence failure.  The input prices are generated by J\"ackel's expanded
reference price, whose small-volatility branch uses a
Taylor-expanded evaluation to avoid cancellation.  Unpolished FlashIV solves
the erfcx/log objective instead; in this near-ATM, very-small-total-variance
corner the two erfcx terms are nearly equal, so the erfcx formula loses enough
relative bits that its root is slightly displaced from the expanded J\"ackel
reference root.  FlashIV+ removes this discrepancy by applying one final Newton
step on the expanded J\"ackel reference price, which is why it returns to the
double-precision floor in that panel.

Figure~\ref{fig:latency-accuracy-pareto} combines the selected latency rows of
Table~\ref{tab:accuracy} with the multiprecision-reference accuracy
recomputation.
The
small pale points show the non-CLY datasets, while CLY-3D, CLY-20, and
CLY-80 are highlighted.  The solid black step line is the nondominated
frontier over each solver's worst CLY observation; in this benchmark the
worst CLY error for every solver occurs on CLY-3D, so the frontier lies
on a subset of the CLY-3D markers.  Solver labels abbreviate each solver's
optional guarded expanded J\"ackel price polish by a trailing `+'.

\begin{figure}[!htbp]
\centering
\includegraphics[width=0.92\textwidth]{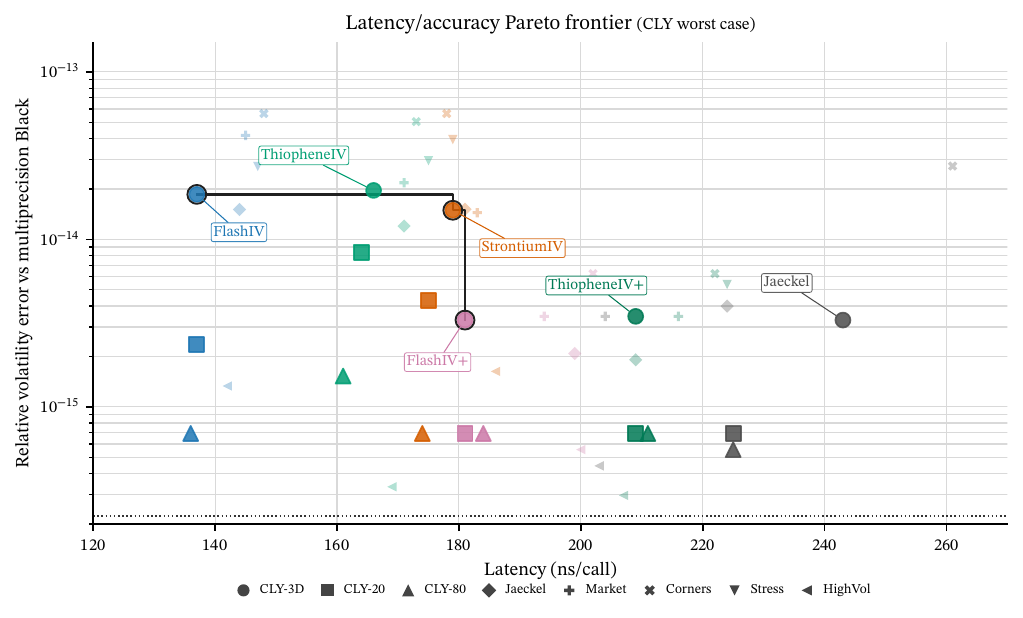}
\caption{Latency/accuracy Pareto frontier across the eight benchmark datasets
with the three CLY datasets highlighted.
The accuracy coordinate is recomputed from prices generated with the
multiprecision Black reference.  The solid step frontier is computed from
each solver's worst error over CLY-3D, CLY-20, and CLY-80; here those
worst CLY points are all CLY-3D observations.  The `+' suffix denotes each
solver's optional guarded expanded J\"ackel price polish.  FlashIV anchors the low-latency
end of the CLY frontier, while the polished rows mark the
near-machine-precision region.\label{fig:latency-accuracy-pareto}}
\end{figure}

Table~\ref{tab:accuracy} compares final solver configurations.  On CLY-3D,
FlashIV runs at 137\,ns, about 44\% lower latency than J\"ackel's solver (243\,ns).
The Anatomy section explains where that latency is spent.

The FlashIV+ row in Table~\ref{tab:accuracy} uses the optional expanded
J\"ackel price polish only below the upper polish guard $c\ge0.99$; in the
guarded high-price region it keeps the complementary gap output. 

\subsection{Anatomy of FlashIV's 137 Nanoseconds}
\label{sec:anatomy}

Table~\ref{tab:costs} breaks down the cost of individual operations.

\begin{table}[!htbp]
\caption{Micro-benchmarked operation costs in the implementation.\label{tab:costs}}
\begin{tabularx}{\textwidth}{lC}
\toprule
\textbf{Operation} & \textbf{Cost (ns)} \\
\midrule
Li rational guess                   & 3.3 \\
Asymptotic OTM guess                & 3.4 \\
Li/asym guess combined (average)    & 4.8 \\
SR guess                            & 26.2 \\
Thiophene L3 guess                  & 17.7 \\
1 fast erfcx (A\&S 7.1.26)          & 2.3 \\
1 exact erfcx (Boost/Commons)       & 5.2 \\
1 fast H3 step                      & 21.8 \\
1 exact H3 step                     & 31.6 \\
Natural logarithm                   & 4.7 \\
Exponential                         & 3.6 \\
Square root                         & 0.7 \\
\addlinespace
Optional J\"ackel--Newton polish (incremental) & about 45--55 \\
\bottomrule
\end{tabularx}
\end{table}

The optional J\"ackel--Newton row is a composite incremental cost, not a
primitive operation: it includes one expanded J\"ackel price evaluation,
the analytic normalised vega, and the Newton correction arithmetic.  It is
therefore excluded from the default FlashIV cost model below, but it
explains the opt-in timings reported in Table~\ref{tab:accuracy}.

The total cost assembles as:
\begin{linenomath}
\begin{equation}\label{eq:costmodel}
  \underbrace{4.8}_{\text{guess}} +
  \underbrace{21.8}_{\text{fast H3}} +
  \underbrace{2 \times 31.6}_{\text{exact H3}} +
  \underbrace{47}_{\text{overhead}} \approx 137\,\text{ns},
\end{equation}
\end{linenomath}
where the roughly 47\,ns overhead comprises $\ln\left(c_{\mathrm{obs}}\right)$
computation ($\sim$4.7\,ns), input normalisation, three logarithms
inside the H3 steps ($\sim$14\,ns), and
loop, branch, and function-call overhead.

Equation~\eqref{eq:costmodel} gives the empirical decomposition behind the
137\,ns CLY-3D timing.  The largest implementation-level contributor is scalar
inlining, which removes the container-return overhead of the original derivative
interface.  The algorithmic contributors are the fast pre-step, the
fixed-count H3 path, and the Li/asymptotic seed dispatch, which together make
the common path predictable and keep the expensive guarded SR seed off most OTM
cases.

%%%%%%%%%%%%%%%%%%%%%%%%%%%%%%%%%%%%%%%%%%
\FloatBarrier
\section{Discussion}
\label{sec:discussion}

The experiments show that FlashIV's latency gain is architectural rather than a
single formula substitution.  The common path uses a 4.8\,ns average
Li/asymptotic seed, a 21.8\,ns fast-erfcx H3 pre-step, and two exact H3 steps
with no adaptive convergence loop.  The result is mostly straight-line scalar
arithmetic, with boundary branches moved out of the common path.

J\"ackel's \emph{Let's Be Rational}~\cite{jaeckel2017} remains the appropriate
reference point.  It is widely used, carefully engineered, and designed for the
full implied-volatility domain.  In the normalised comparison used here,
J\"ackel takes 243\,ns on CLY-3D.  FlashIV is not presented as a more accurate
solver than J\"ackel; it reaches the same practical accuracy class with lower
latency in this benchmark.

StrontiumIV shows that the
guarded SR seed is a strong universal seed, but the 26.2\,ns seed cost raises
latency.  ThiopheneIV explores a different design point: it starts from the
Choi--Huh--Su L3 lower-bound seed and applies lower-tail Euler--Chebyshev and
upper-tail Halley steps, which support monotone-convergence arguments but cost
more than FlashIV's cheaper Li/asymptotic dispatch.  Its construction and proof are described in the ThiopheneIV
paper~\cite[Section~3 and Appendix~D]{lefloc2026thiophene}.  In the present
benchmark it is faster than J\"ackel but slower than FlashIV because it spends
more on the seed/refinement stage.

The guards are as important as the iteration.  Normalised OTM pricing avoids
intrinsic-value cancellation; the erfcx/log objective keeps deep tails
well-scaled; the microscopic Bachelier branch handles prices for which the
Black tail difference is no longer a useful floating-point object; and the
complementary upper-price branch handles high prices near the upper bound.
These branches are narrow, but removing them would make the fast path less robust.  The
ThiopheneIV paper~\cite[Appendices~A.3 and A.4]{lefloc2026thiophene} gives the
fuller corner-case derivation; FlashIV reuses the same production guard layer
around its Li/asymptotic seed and H3 refiner.

The absolute nanosecond values depend on the runtime, the erfcx implementation, and
single-scalar execution.  Faster special functions or vectorised batches would
change the totals.  The more portable conclusion is that cheap
distribution-aware seeds, fixed high-order refinement, and carefully isolated
guard branches can reduce latency without leaving the J\"ackel accuracy class.

%%%%%%%%%%%%%%%%%%%%%%%%%%%%%%%%%%%%%%%%%%
\FloatBarrier
\section{Conclusions}
\label{sec:conclusion}

This paper presented FlashIV, a fixed-count solver for Black--Scholes implied
volatility based on normalised OTM prices, an erfcx/log-price objective, a cheap
distribution-aware initial guess, one fast H3 pre-step, and two exact H3
refinement steps.  On \num{68551} benchmark cases it reaches near-machine-precision
accuracy, with maximum relative volatility error below $7.3\times10^{-14}$, and
runs in 137\,ns per call on the CLY-3D dataset in the reported benchmark.

FlashIV is the fastest solver reported here: 137\,ns on
CLY-3D versus 166\,ns for ThiopheneIV, 179\,ns for StrontiumIV, and 243\,ns for
the J\"ackel \emph{Let's Be Rational} reference
implementation.  The speedup is cumulative: scalar derivative inlining removes
a container-return overhead, while the algorithmic gains come from the fast erfcx
pre-step, fixed iteration count, and cheap seed selection.

The optional guarded J\"ackel--Newton polish changes the trade-off rather than
the main conclusion.  It reduces discrepancies against the expanded J\"ackel
reference price, but raises FlashIV latency to 181\,ns on CLY-3D and about
193\,ns on average across the eight datasets.  Whether that last correction is
worthwhile depends on whether an application needs agreement with this reference
price to the last few ulps.  Ideally, production systems would use the same
high-accuracy J\"ackel-style Black price for pricing, calibration, and inversion.
In practice, the straightforward Black--Scholes formula built from a library
normal CDF is extremely common; in such systems the pricing formula itself is
usually less accurate than the expanded J\"ackel reference price, so the
polishing step may provide no practical benefit.

StrontiumIV and ThiopheneIV support the same architectural conclusion from two
other seed choices.  StrontiumIV makes Healy's SR--Householder-on-log-price approach robust and
accurate over the full benchmark domain, but it is slower than ThiopheneIV
because the guarded SR seed and its case handling are more expensive than the
Choi L3 seed used by ThiopheneIV before the shared guard/refinement layers.
ThiopheneIV's strong advantage is a proof-backed convergence 
while FlashIV is the low-latency member of the family,
using the cheaper distribution-aware seed while retaining the guard structure
needed for robust production use.

%%%%%%%%%%%%%%%%%%%%%%%%%%%%%%%%%%%%%%%%%%
\vspace{6pt}

\authorcontributions{F.L.F. conceived the algorithm. J.H. implemented the code and ran the experiments. F.L.F. wrote the paper.}

\funding{This research received no external funding.}

\dataavailability{The source code used to generate the benchmark
tables, including the solver variants, benchmark harnesses, and dataset
builders, is available in the accompanying source distribution, including the
exact dataset grids used for the tables.}

\acknowledgments{The author thanks Gary Kennedy for a thorough review and
excellent feedback.}

\conflictsofinterest{The author declares no conflicts of interest.}

%%%%%%%%%%%%%%%%%%%%%%%%%%%%%%%%%%%%%%%%%%
\begin{adjustwidth}{-\extralength}{0cm}

\reftitle{References}

\end{adjustwidth}

\end{document}